\newtheorem{theorem}{Theorem}
\newtheorem{lemma}[theorem]{Lemma}
\newtheorem{definition}[theorem]{Definition}
\newcommand{\edit}[1]{{\color{red}#1 }}
\begin{document}

\preprint{APS/123-QED}

\title{New local stabilizer codes from local classical codes}

\author{Lane G. Gunderman}
 \email{lanegunderman@gmail.com}
\affiliation{Department of Electrical and Computer Engineering,
                    University of Illinois Chicago,
                   Chicago, Illinois, 60607}

\date{\today}

\begin{abstract}
Amongst quantum error-correcting codes the surface code has remained of particular promise as it has local and very low-weight checks, even despite only encoding a single logical qubit no matter the lattice size. In this work we discuss new local and low-weight stabilizer codes which are obtained from the recent progress in $2D$ local classical codes. Of note, we construct codes with weight and qubit use count of $5$ while being able to protect the information with high distance, or greater logical count. We also consider the Fibonacci code family which generates weight and qubit use count of $6$ while having parameters $[[O(l^3),O(l),\Omega(l)]]$. While other weight-reduction methods centered on lowering the weight without regard to locality, this work achieves very low-weight and geometric locality. This work is exhaustive over translated classical generators of size $3\times 3$ and up to size $17\times 17$ classical bit grids.
\end{abstract}

\maketitle

\section{Introduction}

Quantum mechanical interactions are generated through geometrically local interactions, be it between photons, electron orbitals, or other modes. This naturally motivated means of protecting quantum information using checks that are also geometrically local. Besides this, another natural motivation is to use fewer body interactions as precise control of many-body interactions becomes increasingly challenging as the number of interactions increases. This latter aim led to ideal methods for protecting quantum information requiring few body interactions, which in terms of quantum error-correcting codes corresponds to low-weight checks. These pair of restrictions led to the dominance of the surface code in quantum error-correcting codes as the checks are localized in $2D$ space using $4$ neighboring additional qubit interactions in each check \cite{kitaev2006anyons}. While these properties of the code permit it to have a high threshold, and recently experimentally achieve prolonged storage of information, no matter the size of the surface patch used only a single logical qubit is encoded always \cite{acharya2024quantum}. This low logical encoding then requires lattice surgery to perform multi-qubit operations, which can increase the overhead a fair amount \cite{horsman2012surface,litinski2019game}. There exist non-Euclidean lattices which provide for more encoding, however, their physical implementability for some platforms would be very challenging \cite{breuckmann2016constructions}.

A radically different approach is to drastically increase the encoding rate while retaining the low-weight property of the checks, but permitting the geometric locality to be lost. These result in non-local quantum low-density parity-check (qLDPC) codes \footnote{Formally speaking, the surface code is a qLDPC code, but here we mean a non-localized code}. While these can have low-weight and higher encoding rate, the fact that the checks are so spread out can make performing these checks challenging. The very best qLDPC codes even require a large amount of far range checks \cite{bravyi2010tradeoffs,baspin2024improved,dai2024locality}.

Given this difficult trade-off, a myriad of different approaches have evolved which aim to keep locality while reducing the weight of the checks--amongst these approaches include hyperboic lattice codes \cite{breuckmann2016constructions}, subsystem codes \cite{aliferis2007subsystem,bravyi2012subsystem,higgott2021subsystem}, floquet codes \cite{hastings2021dynamically,higgott2024constructions}, and weight reduction methods \cite{hastings2021quantum,sabo2024weight,he2025discovering}. Here we restrict ourselves to true stabilizer codes (the quantum analog of classical additive codes), which are believed to be the most promising approach and have a number of nice fault-tolerant features known for them. In this work, we primarily focus on smaller code instances which: 1) are purely geometrically local, 2) have very low weight, 3) encode more than a single logical register. This work leverages the recent results on $2D$ local classical codes to obtain these results, however, we provide some further analysis of the codes which are possible and illustrate their possible great utility for protecting quantum information \cite{ruiz2025ldpc}.

The work is organized as follows. In the next section definitions are laid out. Following this we show our construction and provide various instances with particularly appealing properties as well as families for which this method applies, making comparisons with other promising results. We then close out by indicating possible future directions to extend this work.

\section{Definitions}

This work focuses on balancing locality, weight, and the parameters of quantum codes. Since this is achieved through using classical codes, we begin by defining the parameters for classical additive error-correcting codes. The parameters are a triplet of values: $n$ specifies how many classical registers (bits) are used, $k$ the number of logical registers (bits) which are bases for the null space of the parity check matrix, and $d$ the distance which is the minimal number of columns which are linearly dependent. These are written as $[n,k,d]$.

In the quantum case we assume that the given codes have been written in the symplectic representation so that the codes are represented as a matrix of powers of the Pauli operators \cite{lidar2013quantum}. Using this representation, the parameters for the quantum code are rather similar to the classical code: $n$ is the number of physical registers (qubits, or more general), $k$ is the number of logical registers which can be deduced from the number of physical registers minus the count of the symplectically independent rows of the check matrix, and $d$ is the distance of the code which is the smallest number of columns that are symplectically linearly dependent, but does not correspond to one of the checks. These are written as $[[n,k,d]]$.

A final element needed for the results in this work is our selected method to transform classical codes into quantum codes. Let $H_1$ be a classical linear code with $r_1$ rows and parameters $[n_1,k_1,d_1]$ and let $H_2$ be a classical linear code with $r_2$ rows and parameters $[n_2,k_2,d_2]$. Then we may form the hypergraph product stabilizer code from these codes with parity checks given by $H_X=(H_1\otimes I_{n_2}\quad I_{r_1}\otimes H_2^T)$ and $H_Z=(I_{n_1}\otimes H_2\quad H_1^T\otimes I_{r_2})$ \footnote{For the qudit version, or generally local-dimension-invariant form, we simply put a minus sign in front of one term so that the symplectic product is perfectly zero.}\cite{tillich2013quantum}. The resulting code will have parameters $[[n_1n_2+r_1r_2,k_1k_2+k_1^Tk_2^T,\min(d_1,d_2,d_1^T,d_2^T)]]$.

\subsection{Notation}

Throughout this work we use alphabetically ordered strings of the letters $a$ to $i$ to indicate the cyclically generating local $2D$ classical checks where the letters indicate support within the following $3\times 3$ fundamental check block:
\begin{equation}
    \begin{pmatrix}
        a &b&c\\
        d&e&f\\
        g&h&i
    \end{pmatrix}.
\end{equation}
As a simple example the string $"ab"$ would indicate a set of repetition codes whereby adjacent registers are used in the check, then $"ab"$ is translated across the $2D$ set of registers to generate the full set of checks. When discussing particular instances we augment this notation for generators with a subscript indicating the dimension of the grid of registers in the form $w\times h$ (width by height), and following the string with the traditional classical error-correcting code parameter notation $[n,k,d]$. In this work we will only focus on pairing local $2D$ codes with the repetition code in a hypergraph product construction, although other $1D$ local codes could be employed, such as cyclic codes, however, this would increase the weight and the spatial spread a bit, and so we leave that as a future direction instead. The quantum code derived from using the given classical code paired with the repetition code will be denoted by a right arrow $\rightarrow$ followed by the traditional denotation of the code's parameters $[[n,k,d]]$. As an example, $"cdg"_{8\times 10}[80,10,27]\rightarrow [[2420,10,27]]$. 

Additional notations include a breakdown of the distance into the pair of values $(d_x,d_z)$ indicating the protection against bit and phase flip errors, which is particularly useful for biased noise systems where error rates are unequal. In the work introducing the $2D$ local \textit{classical} codes, they effectively select one distance to be $1$, as they are working with cat qubits which are highly noise biased qubits \cite{ruiz2025ldpc}. Other platforms also exhibit biased noise, including biased erasures, however, we leave further analysis to future work.

Beyond this, we define the check weight of a hypergraph product code as the stabilizer generator with the most nonzero row entries in the symplectic representation of the code and denote it for a stabilizer code by $w$. For the codes considered in this work these will be the classical check's weight plus $2$, as we only consider using the repetition code. We also define a pair of qubit use weight, or column weight. As we are working with CSS codes we can define the $X$-check column weight and the $Z$-check column weight as being respectively the maximal count of nonzero entries within the $X$, and $Z$, supported stabilizer generators. These are denoted by $q_x$ and $q_z$ and in this work are given by the check weight of the $2D$ local classical code. The last weight we define is the total register use count, or total column weight, which is the maximal Hamming weight of the symplectic columns (meaning that in the symplectic representation we add the weight of column $i$ and $i+n$ since they correspond to the same physical register), which we denote by $q$. In this setting $q$ is given by the weight of the classical $2D$ code plus $2$ as the physical registers within the $X$ stabilizer are the local $2D$ checks are supported in the $Z$ stabilizers by the repetition code, and the transpose is true for the other registers. Aggregating this, for the code specified by $"cdg"_{8\times 10}[80,10,27]\rightarrow [[2420,10,27]]$, we have $w=5$, $q_x=q_z=3$ and $q=5$. Of particular note this beats the weights from \cite{hastings2021quantum,sabo2024weight} while also being local. Unfortunately, asymptotically the performance of these codes will be limited, thus not purely outperforming these prior results.

\section{Construction details}

Here we outline some of the details and perks of the construction employed in this work. While the short version is that they simply employ a classical local $2D$ code along with a repetition code as inputs for a hypergraph product code, we provide more details in what follows.


As a first observation, we may consider the surface (or toric) code, which is geometrically local and has low check weights. This code results from the hypergraph product of two repetition codes, which are classical local $1D$ codes. This construction retains the locality and sparsity as the classical codes being used also are such. This motivates the following observation: if a classical code has sparse and local check operators, the transpose code will also have these properties. This is evident by considering the Tanner graph and swapping the roles of the checks and bits. Importantly this holds beyond the $1D$ case.

As our next observation, the hypergraph product construction uses the Cartesian product of the Tanner graphs for the classical codes to generate the qubit patches and the checks, a nice example is seen in \cite{manes2025distance}. This Cartesian product also preserves sparsity and locality. Then if we have a classical local $2D$ code, such as those studied in \cite{ruiz2025ldpc}, we may pair that with a classical local $1D$ code, herein the repetition code $[n,1,n]$ using $n-1$ generators, to generate a pair of $3D$ qubit patches. The checks will be bi-local in that they will use a pair of locally supported checks--one part of the support in the bit/bit patch and one part of the support in the check/check patch, as seen in Figure \ref{fig:enter-label}. We may then complete our construction by inter-laying these $3D$ patches so that the checks are now mono-local. Another notable feature of this construction is that the classical $2D$ code decoder suffices for the decoding problem, meaning that these codes are also likely easier on the decoding front \cite{golowich2024decoding}.

While that completes the construction there are still other aspects to take into consideration. Firstly, we ought to consider the differences in the sizes of the bit/bit qubit patch and the check/check qubit patch. When these differ significantly the spatial range of the checks may be far longer than would be physically appropriate for platforms such as superconducting devices. This can be greatly reduced when the classical codes used not only satisfy the locality constraint but also are low-rate so that the patches are roughly the same size and true constant length lattice locality is achieved. On the other hand this consideration may be of far lower importance for trapped ion and neutral atom systems. In such systems whole lattices corresponding to the different patches might be better suited to be translated as a group. In photonic based devices, this consideration can likely be effectively ignored by different timing of checks or fiber optics routing.

With our construction provided and details sufficiently discussed, we now turn to outlining a variety of examples.

\begin{figure}[t]
    \centering
    \includegraphics[width=\linewidth]{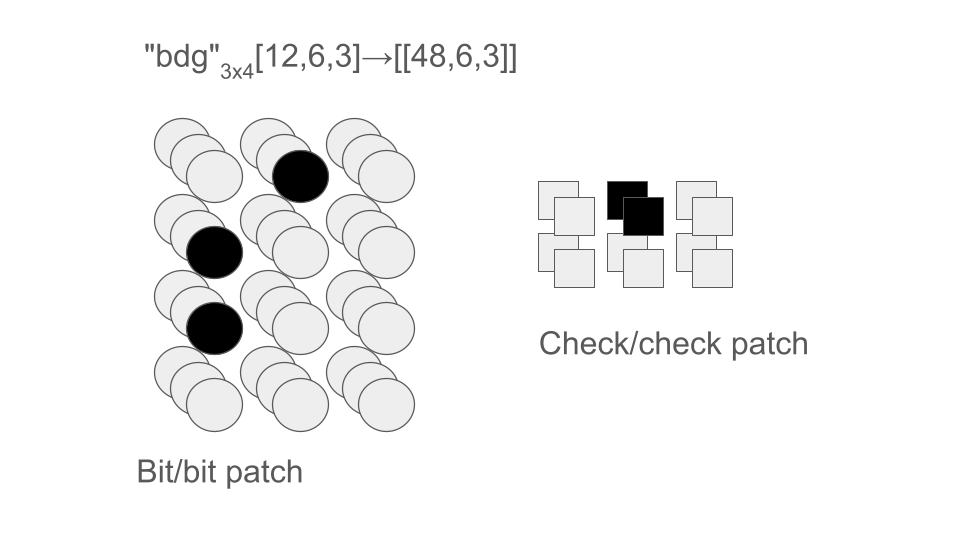}
    \caption{An example code constructed from the generator $"bdg"$ making a $[[48,6,3]]$ code with check weight and qubit use count of $5$ each. As discussed in the text, these patches would be interwoven into a single $3D$ code, and as these differ in dimensions, either translation would be needed or longer spatial range interactions would be required.}
    \label{fig:enter-label}
\end{figure}

\section{Local Code Examples}

\subsection{Repetition-like families}

As our first example family we consider the codes generated by generator patches $"ab"_{w\times h}[wh,h,w]\rightarrow [[whd+(wh-h)(d-1),h,(w,d)]]$. The weight is $4$ for both check weight and qubit usage. This is the full description of the parameters, however, for interpretation, we begin by considering the classical code generated by $"ab"$. This code is equivalent to $h$ copies of a $[w,1,w]$ repetition code. Upon pairing this with a separate repetition code $[d,1,d]$ in a hypergraph product construction, this becomes a stack of (perhaps non-symmetric) surface codes. 


\subsection{The Fibonacci Code}

As an asymptotic family, let us consider the Fibonacci code as introduced in \cite{yoshida2013exotic}. This classical local $2D$ code had parameters going as $[O(l^2),O(l),\Omega(l)]$ with classical check operators of weight $4$. A decoder for this classical code was studied in \cite{nixon2021correcting}, which paired with the recent result in \cite{golowich2024decoding} means that our hypergraph product construction of a local $3D$ quantum code will benefit from the improved decoder therein. The stabilizer code obtained from using a Fibonacci code with a repetition code in the hypergraph product construction generates a code with parameters $[[O(l^3),O(l),\Omega(l)]]$, assuming a repetition code is chosen to equate $d_z$ with $d_x$. This code has check weight $6$ and qubit usage number also of $6$. This code at least saturates the $2D$ local code optimality bound of $kd^2/n=\Omega(1)$ (it could exceed it as the distance is not tightly known for the Fibonacci code), but does not saturate the $3D$ bound as the Layer code achieves \cite{williamson2024layer}. While this does not saturate the $3D$ code bound, this code has spatially small geometric locality, and thus is likely a more practical option than the Layer code families.


\subsection{Selected instances}

In the tables we highlight many code examples. In Table \ref{biasedtable} we list the parameters which would be obtained using the specially tailored $2D$ codes from \cite{ruiz2025ldpc}, taking care to permit the length of the repetition code be specified later so that these are particularly suited for biased-noise systems (using the repetition code for the less noise-prone error-axis). As these have been specially tailored, they are promising early candidates. In Table \ref{kdbyn} we list out of the optimal values obtained for $kd^2/n$ upon considering grids of size up to $17\times 17$ and using the translationally generated generators--where ties existed an automata generator was selected. In this table we see a massive increase in the ratio $kd^2/n$, a common metric for comparison with surface codes, for $2D$ generators of weight $3$ and $4$, but then the marginal gain decreases. Lastly, in Table \ref{ratios} we provide examples of codes which obtain the best $d/n$ and $k/n$ ratios for the resultant hypergraph product code, separated by the check and qubit use weights and selecting for $d\geq 3$ and $k\geq 2$. It is also worth noting that a number of the codes constructed outperform the parameters from \cite{sabo2024weight} while having the same or lower weights and geometric locality--albeit the asymptotic behavior for the codes considered here is worse.

\section{Conclusion and Future Directions}

In this work we have discussed methods and details related to transforming classical $2D$ local codes into quantum error correcting codes with local checks. As the resulting patches may not be of equal size, at times these local, low-weight checks might not be spatially local but merely geometrically local. We also discussed how these can leverage the classical decoders to obtain more rapid decoding. Likewise, this work has primarily analyzed the parameter advantage which can be obtained from the codes constructed, however, it would be of great value to see how great the advantage is for different platforms. As the precise capabilities of platforms vary, providing a numeric expression for the advantage of the locality of these codes is challenging. Additionally methods for performing logical operations and other fault-tolerant properties must be investigated to determine the comparative advantage of these codes. For instance, in some devices spatial swaps of physical qubits are performed effectively perfectly, while in others in can be of tremendous cost. As such, we leave this to further investigation. We believe that these codes could prove to be very beneficial and opens the set of options for local quantum error-correcting codes.











\if{false}
The following Lemma forms our central observation which will enable our main result:

\begin{lemma}
Let $H$ be a classical error-correcting code that is geometrically local, then the transpose code $H^T$ is also geometrically local.
\end{lemma}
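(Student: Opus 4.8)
The plan is to argue entirely at the level of Tanner graphs, since geometric locality is really a property of how a code's bipartite graph sits inside Euclidean space. Recall that $H\in\mathbb{F}_2^{r\times n}$ determines a bipartite graph $T(H)$ with $n$ bit-nodes and $r$ check-nodes, an edge joining bit-node $j$ to check-node $i$ exactly when $H_{ij}=1$. I will take ``geometrically local in $D$ dimensions'' to mean: there is a placement $\phi$ of the bit-nodes at points of $\mathbb{R}^D$ such that only $O(1)$ of them fall in any unit ball, the support of every row has Euclidean diameter at most a constant $\rho$, and (LDPC-ness) all row and column weights are $O(1)$. The first observation I would record is the trivial but crucial one the text alludes to: $T(H^T)$ is the same bipartite graph as $T(H)$ with the two sides swapped, and in particular the row weights of $H^T$ equal the column weights of $H$, so $H^T$ is again LDPC with the same degree bound.

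Next I would produce an explicit placement of the $r$ new bit-nodes of $H^T$ --- which are the old check-nodes --- by sending old check $i$ to $\phi(b_{j(i)})$, where $b_{j(i)}$ is any fixed bit in the support of that check, say the one of smallest index. Two things then need checking. For bounded density: an old check placed near a given point must, by construction, contain an old bit placed near that point, and each old bit lies in only $O(1)$ checks while only $O(1)$ old bits sit near any point, so at most $O(1)$ new bit-nodes land in any unit ball. For locality: a ``check'' of $H^T$ is an old bit $b_j$, and its support is the set of old checks containing $b_j$; for each such check $c_i$ both $b_j$ and the chosen representative $b_{j(i)}$ lie in the support of $c_i$, which has diameter $\le\rho$, so $c_i$'s new location is within $\rho$ of $\phi(b_j)$. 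Hence every row of $H^T$ has support of diameter at most $2\rho=O(1)$, which is exactly geometric locality for $H^T$.

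The only real obstacle is definitional bookkeeping: different papers formalize ``geometrically local'' slightly differently --- bits on a clean lattice $\mathbb{Z}^D$ versus merely bounded density, whether the check-node coordinates are part of the data, a fixed locality radius versus one allowed to grow slowly --- and I would need to confirm the argument above survives whichever convention is in force. In every standard variant it does, at the cost of at most rescaling the constants; for a strict lattice, one snaps the bounded-density point set $\phi(C)$ to a fine enough sublattice injectively with $O(1)$ displacement. Finally I would remark that this is the same Tanner-graph reasoning --- together with the fact that a Cartesian product of two bounded-degree, bounded-diameter graphs is again bounded-degree and bounded-diameter --- that justifies the locality of the hypergraph-product codes built in the construction, so the lemma slots in exactly where the text uses it.
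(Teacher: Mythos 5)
Your proposal is correct and follows essentially the same route as the paper's own argument: the paper's proof sketch observes that each register lies in a bounded number of nearby checks (and explicitly notes the Tanner-graph-transpose viewpoint), which is exactly the argument you carry out. Your version is simply a more careful rendering of it, making explicit the placement of the transposed bit-nodes, the bounded-density check, and the $2\rho$ diameter bound that the paper leaves implicit.
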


\begin{proof}
A geometrically local code's checks will involve some small constant which bounds the weight within a volume given by the extent radius $r$. As such the number of checks using any given register will be bounded by a small constant that is a function of $r$ and the spatial dimensionality of the code, and can be sorted as nearby parity checks, and so $H^T$ will also be geometrically local.
\end{proof}

This could also be seen by considering the transpose of the Tanner graph.
\fi

\begin{widetext}

\begin{table}[h]
    \centering
    \begin{tabular}{|c|c|c|c|c|c|c|} \hline 
        2D seed family & $n$ & $k$ & $d_x$ & $d_z$ & check weight & qubit weight\\ \hline 
         $[20+4l,10+2l,5]$ from \cite{ruiz2025ldpc} & $d(20+4l)+(d-1)(10+2l)$ & $10+2l$ & $d$ & $5$ & $6$ & $6$\\ \hline 
         $[55+5l,22+2l,9]$ from \cite{ruiz2025ldpc} & $d(55+5l)+(d-1)(33+3l)$ & $22+2l$ & $d$ & $9$ & $6$ & $6$\\ \hline 
         $[78+6l,26+2l,12]$ from \cite{ruiz2025ldpc} & $d(78+6l)+(d-1)(52+4l)$ & $26+2l$ & $d$ & $12$ & $6$ & $6$\\ \hline 
         $[119+7l,34+2l,16]$ from \cite{ruiz2025ldpc} & $d(119+7l)+(d-1)(85+5l)$ & $32+2l$ & $d$ & $16$ & $6$ & $6$\\ \hline 
         $[136+8l,34+2l,22]$ from \cite{ruiz2025ldpc} & $d(136+8l)+(d-1)(102+6l)$ & $34+2l$ & $d$ & $22$ & $6$ & $6$\\ \hline
         $[O(l^2),l,\Omega(l)]$ Fibonacci code from \cite{yoshida2013exotic} & $O(l^2d)$ & $O(l)$ & $d$ & $\Omega(l)$ & $6$ & $6$\\ \hline
    \end{tabular}
    \caption{Parameters for bi-local codes where we pair local $2D$ codes with a repetition code. One could pair with other local $1D$ codes, but we focus on this here.}
    \label{biasedtable}
\end{table}

\begin{table}[h]
    \centering
    \begin{tabular}{|c|c|c|c|c|c|c|} \hline 
        2D generator and resulting HGP parameters & (width,height) for $2D$ code & weight $w$ & qubit use $q$ & $\frac{kd^2}{n}$\\ \hline 
        $"be"_{17\times 5}[85,5,17]\rightarrow[[1525,5,17]]$ & $(17,5)$ & $4$ & $4$ & $0.948$\\ \hline
         $"cdg"_{5\times 16}[80,10,27]\rightarrow [[2420,10,27]]$ & $(5,16)$ & $5$ & $5$ & $3.012$\\ \hline 
         $"bdfg"_{17\times 17}[289,34,68]\rightarrow[[21930,34,68]]$ & $(17,17)$ & $6$ & $6$ & $7.169$\\ \hline
         $"cdghi"_{17\times 17}[289,34,75]\rightarrow[[24191,34,75]]$ & $(17,17)$ & $7$ & $7$ & $7.906$\\ \hline
         $"adfghi"_{17\times 16}[272,34,76]\rightarrow[[23222,34,76]]$ & $(17,16)$ & $8$ & $8$ & $8.457$\\ \hline
         $"abcdghi"_{17\times 17}[289,34,83]\rightarrow[[26775,34,83]]$ & $(17,17)$ & $9$ & $9$ & $8.748$\\ \hline
         $"abcdfghi"_{17\times 16}[272,34,80]\rightarrow [[24446,34,80]]$ & $(17,16)$ & $10$ & $10$ & $8.901$\\ \hline
    \end{tabular}
    \caption{In this table we provide a comparison of the codes in this work against the surface code. The surface code has $\frac{kd^2}{n}=\frac{1}{2}$. This table uses the largest value example code, selecting automata codes and smallest examples where equivalent, and is exhaustive over all $3\times 3$ generating patches and up to size $17\times 17$.}
    \label{kdbyn}
\end{table}

\begin{table}[h]
    \centering
    \begin{tabular}{|c|c|c|c|c|c|c|} \hline 
        2D generator and resulting HGP parameters & $k/n$ & $d/n$ & weight $w$ & qubit use $q$\\ \hline 
        Surface code $[[2d^2,1,d]]$ & $\frac{1}{2d^2}$ & $\frac{1}{2d}$ & 4 & 4\\ \hline
        $"ad"_{3\times 3}[9,3,3]\rightarrow [[33,3,3]]$ & $0.091$ & $0.091$ & $4$ & $4$\\ \hline
        $"bdg"_{3\times 4}[12,6,3]\rightarrow [[48,6,3]]$ & $0.125$ & $0.062$ & $5$ & $5$\\ \hline
        $"cde"_{3\times 4}[12,3,3]\rightarrow [[42,3,3]]$ & $0.071$ & $0.071$ & $5$ & $5$\\ \hline
        $"achi"_{5\times 4}[20,12,3]\rightarrow [[84,12,3]]$ & $0.143$ & $0.036$ & $6$ & $6$\\ \hline
        $"abde"_{3\times 3}[9,5,3]\rightarrow [[37,5,3]]$ & $0.135$ & $0.081$ & $6$ & $6$\\ \hline
        $"bdfgh"_{5\times 3}[15,10,3]\rightarrow [[65,10,3]]$ & $0.154$ & $0.046$ & $7$ & $7$\\ \hline
    \end{tabular}
    \caption{In this table we provide a comparison of the codes in this work against the surface code. Here we provide some instances where attention has been given to achieving large ratios for either $k/n$ or $d/n$ for the resulting hypergraph product code. To keep the resulting codes of note, we restrict to $k\geq 2$ and $d\geq 3$. This table selects automata codes and smallest examples where equivalent, and is exhaustive over all $3\times 3$ generating patches for classical codes of size up to $17\times 17$.}
    \label{ratios}
\end{table}
\end{widetext}




\if{false}
The initial idea for a quantum computer was argued as a tool for simulating quantum systems \cite{feynman2018simulating}. This problem was advanced by Lloyd proposing a method for actually simulating quantum systems by approximating it with products of operators \cite{lloyd1996universal}. Unfortunately such products have very poor error-scaling unless higher order product series are used such as those of Suzuki \cite{suzuki1990fractal,suzuki1991general}. Additionally, this seminal work by Lloyd did not specify what sort of operations would be used in the product formula. The simplest building block operations for quantum devices are the Pauli operators, so if the system can be simulated using these operations the feasibility of utilizing some sort of near-term advantage would be improved, providing a useful algorithm for NISQ (noisy intermediate-scale quantum) devices \cite{preskill2018quantum}. A variety of transformations have been devised for transforming Hamiltonians for quantum systems, such as molecules or field-theories, into Pauli operators with associated weights \cite{bravyi2002fermionic,tranter2018comparison,mcclean2014exploiting,mcclean2020openfermion}. Well controlled qubits, especially logical ones, with long coherence time are a precious resource, especially in more near-term quantum devices. These all perform reasonably well in this regard, however, this work solves one open question related to these transformations: for a particular collection of Pauli operators obtained from these transformations, or from some other setting, what is the minimal number of qubits needed in order to perform such operations? Our answer is found from quantum error-correction techniques. Additionally, our solution finds the obtainable lower-bound and also provides an efficient methodology for transforming the original collection of Pauli operators into an equivalent, under Clifford operations, collection using the minimal number of qubits nontrivially. \edit{The equivalence considered in this work is under Clifford conjugations.} To help avoid confusion for this work achieving Pauli operators with minimal weights, or number of non-trivial operations within each multi-qubit Pauli operator, instead we this say that this result minimizes the number of registers, or physical qubits that would be needed to perform the operations. Note that this work has more general applications--wherever Pauli operators appear (from quantum complexity theory to non-local games to fault-tolerance to communication protocols)--but we focus here on the setting of more near-term algorithms related to simulating quantum systems. One way to interpret this paper is as determining an optimal set of Pauli operators resulting from a Clifford circuit \cite{gottesman1998heisenberg,aaronson2004improved}. The optimality achieved is in the number of total qubits, registers, needed in the quantum device to obtain the same circuit as the original given set. The application of this result to the simulation of quantum systems yields the optimal compression from Fermionic modes to qubits, such as that found in \cite{jiang2020optimal,harrison2022reducing}, however, this result extends more broadly as it is just an equivalence between Pauli operators.


Formally, quantum computers operate by performing unitary operators. The rudimentary operations for qubits are the Pauli operators, which act as follows on the computational basis states $\{|0\rangle,|1\rangle\}$:
\begin{equation}
    X|j\rangle=|j\oplus 1\rangle,\quad Z|j\rangle =(-1)^j|j\rangle,\quad j\in \{0,1\}.
\end{equation}
There is one final non-trivial operator, $Y$, which we take as the product of $X$ and $Z$ \footnote{While $XZ= -ZX$ and $Y=iXZ$, for the sake of our arguments it suffices to note that $Y$ is just some scalar times the product}. \if{false}As matrices these are explicitly:
\begin{equation}
    X=\begin{bmatrix}
    0 & 1\\
    1 & 0
    \end{bmatrix},\ Y=\begin{bmatrix}
    0 & -i\\
    i & 0
    \end{bmatrix},\ Z=\begin{bmatrix}
    1 & 0\\
    0 & -1
    \end{bmatrix}.
\end{equation}\fi
These Pauli operators form a basis for Hermitian $2\times 2$ matrices over $\mathbb{C}$. The $n$-fold tensor product of Pauli operators is indicated by $\mathbb{P}^n$, and likewise these $n$-fold tensor products of Pauli operators form a basis for $2^n\times 2^n$ Hermitian matrices over $\mathbb{C}$, which in turn generates the full Lie group for $n$ qubits, meaning that all Unitary transformations are generated from the exponentiation of sums of these operators. Assuming the rotation does not correspond to a rotation about a separable Pauli axis, physically implementing a rotation which corresponds to a sum of these operators is generally not possible. If all the Pauli operators commute, and the Clifford group is easy to physically implement, these Pauli operators can be diagonalized, as well as generally finding Clifford equivalence circuits, efficiently \cite{gottesman1998heisenberg, aaronson2004improved}. In this work we are not restricting ourselves to commuting Pauli operators only so to approximate the exponentiation of sums of non-commuting Pauli operators this could be accomplished with Trotter-Suzuki series. Let $\{A_i\}$ be a collection of Pauli operators with associated weights $\{w_i\}$, then we may approximate $e^{x\sum_{i=1}^n w_iA_i}$ with:
\begin{equation}
            S_1^*(x)=(\prod_{i=1}^n e^{\frac{1}{2}xw_iA_i})(\prod_{i=n}^1 e^{\frac{1}{2}xw_iA_i}),
        \end{equation}
where the first product has the operators in forward order, while the second one has them in reverse order, and generating the further expressions with:
\begin{widetext}
        \begin{eqnarray}
            S_{2m}^*(x)&=&S_{2m-1}^*(x)\\
            &=&[S_{2m-3}^*(p_m x)]^2S_{2m-3}^*((1-4p_m)x)[S_{2m-3}^*(p_m x)]^2,\quad p_m=(4-4^{1/(2m-1)})^{-1},
        \end{eqnarray}
        \end{widetext}
where the equations come from \cite{suzuki1990fractal,suzuki1991general}. This is correct up to $O(x^{2m+1})$, with the remaining error being related to commutators of the operators, so in principle can generate arbitrary operations arbitrarily accurately. This is not the focus of this work, but provides a brief setting to consider the results of this work.

Returning to the Pauli operators on $n$ qubits, working with these non-commuting operators directly is rather cumbersome and so we utilize the symplectic representation for the Pauli operators to transform the tool kit from those of non-commuting group theory to those of linear algebra over $\mathbb{Z}_2$ \cite{nielsen2002quantum,ketkar2006nonbinary}. This linearity of the mapped versions will help with some of the results, reducing them to linear algebraic spaces. 

\begin{definition}
The binary symplectic representation of a Pauli operator is the mapping, $\phi$, which maps $\mathbb{P}^n\mapsto \mathbb{Z}_2^{2n}$. It takes the $i$-th register in the Pauli and sets the $i$-th position in the binary vector to the power of the $X$ operator at that position and the $i+n$-th position to the power of the $Z$ operator. In short, this performs:
\begin{equation}
    \phi \left(\bigotimes_{t=1}^n X^{a_t}Z^{b_t}\right)=\left(\bigoplus_{t=1}^n a_t\right)\bigoplus \left(\bigoplus_{t=1}^n b_t\right).
\end{equation}
This mapping observes:
\begin{equation}
    \phi(s_i\circ s_j)=\phi(s_i)\oplus \phi(s_j)
\end{equation}
where $s_i$ and $s_j$ are two Pauli operators in $\mathbb{P}^n$ and $\oplus$ is componentwise vector addition mod two.
\end{definition}
In this work we will be particularly interested in the commutation relations between Pauli operators, which in the $\phi$ representation is computed as follows:
\begin{definition}
The commutator, formally the symplectic product, between two Pauli operators $\phi(p_1)=(\vec{a}_1|\vec{b}_1)$ and $\phi(p_2)=(\vec{a}_2|\vec{b}_2)$ is computed as:
\begin{equation}
    \phi(p_1)\odot \phi(p_2)=\vec{a}_1\cdot \vec{b}_2-\vec{b}_2\cdot \vec{a}_2 \mod 2
\end{equation}
\end{definition}
For qubit Pauli operators this is $0$ if $p_1$ and $p_2$ commute, and $1$ if not (they anti-commute). Generally this provides the power of the corresponding root of unity induced from swapping the order of the Pauli operators if considering the qudit case as well \cite{ketkar2006nonbinary}.

For convenience we also introduce the following definition to specify the number of compositionally independent Pauli operators contained in our given collection of Pauli operators. \edit{A Pauli operator is compositionally independent of a collection of Pauli operators if there is no way to multiply members of the collection to obtain that operator. A set of compositionally independent Pauli operators are generators for a subgroup of Pauli operators.}
\begin{definition}
For a collection of Pauli operators $\mathcal{P}$ we denote by $rank_\phi(\mathcal{P})$ the number of compositionally independent generators for $\mathcal{P}$, we call this the $\phi$-rank of $\mathcal{P}$.
\end{definition}

\if{false}The $\phi$-rank of $\mathcal{P}$ is simply the rank of the symplectic representation of $\mathcal{P}$.\fi Since there will be a second rank of importance in this work, we have made this separate distinction. In this setting we are only allowed Clifford operations on our collection of Pauli operators. With these preliminary definitions, we are prepared to formally state the problem solved in this work as well as lay our groundwork for proving this result.

\fi

\section*{Acknowledgments}

This work would not have been possible without the availability of the results from \cite{ruiz2025ldpc}. Further, we thank Diego Ruiz for helpful early comments and Mike Vasmer for a very insightful discussion.



\bibliographystyle{unsrt}
\phantomsection  
\renewcommand*{\bibname}{References}

\bibliography{main}

\end{document}